\theoremstyle{definition}
\newtheorem{definition}{Definition}
\newtheorem{proposition}{Proposition}
\newtheorem{assumption}{Assumption}
\title{LILAD: Learning In-context Lyapunov-stable Adaptive Dynamics Models}
\author{
    Amit Jena\textsuperscript{\rm 1},
    Na Li\textsuperscript{\rm 2},
    Le Xie\textsuperscript{\rm 2}
}
\begin{document}

\maketitle

\begin{abstract}
System identification in control theory aims to approximate dynamical systems from trajectory data. While neural networks have demonstrated strong predictive accuracy, they often fail to preserve critical physical properties such as stability and typically assume stationary dynamics, limiting their applicability under distribution shifts. Existing approaches generally address either stability or adaptability in isolation, lacking a unified framework that ensures both.
We propose LILAD (Learning In-Context Lyapunov-stable Adaptive Dynamics), a novel framework for system identification that jointly guarantees adaptability and stability. LILAD simultaneously learns a dynamics model and a Lyapunov function through in-context learning (ICL), explicitly accounting for parametric uncertainty. Trained across a diverse set of tasks, LILAD produces a stability-aware, adaptive dynamics model alongside an adaptive Lyapunov certificate. At test time, both components adapt to a new system instance using a short trajectory prompt, which enables fast generalization.
To rigorously ensure stability, LILAD also computes a state-dependent attenuator that enforces a sufficient decrease condition on the Lyapunov function for any state in the new system instance. This mechanism extends stability guarantees even under out-of-distribution and out-of-task scenarios.
We evaluate LILAD on benchmark autonomous systems and demonstrate that it outperforms adaptive, robust, and non-adaptive baselines in predictive accuracy.
\end{abstract}

\section{Introduction}
Identifying dynamical systems from trajectory data is a foundational problem with wide-ranging applications in power systems, biological systems, and beyond. Despite the extensive literature on linear system identification, many real-world systems are inherently nonlinear and lie beyond the scope of linear models. Moreover, learning a linear approximation from data generated by a nonlinear system introduces modeling errors, often resulting in suboptimal performance. Many research works attempt to alleviate this challenge by employing nonlinear basis functions, such as sinusoidal series \cite{korenberg1989sinusoid} or sum-of-squares representations \cite{li2018sos}, as well as iterative optimization methods like Monte Carlo techniques \cite{schon2015montecarlo} and genetic algorithms \cite{chang2007genetic}. However, these approaches rely on restrictive assumptions such as polynomial structure or sufficient smoothness of dynamics. Such assumptions limit their applicability, and may lead to a failure in capturing the complex, high-frequency transient behavior exhibited by many real-world systems.

Neural networks offer a flexible system identification framework due to their expressive representational capacity, enabling modeling of a broader class of systems. However, due to their black box nature, the key physical properties of the underlying dynamics such as stability, monotonicity, dissipativity don't reflect in the learned models. Standard training on state–next-state pairs typically prioritizes accuracy over preserving these essential system characteristics, thereby inhibiting the deployability of such learned models in safety-critical applications.

A recent body of work has focused on embedding physical properties into learned models, aiming to create trustworthy digital twins of the underlying systems. One such approach \cite{kolter2019learning} enforces stability by jointly training a dynamics model with a neural Lyapunov function. Another notable method \cite{kang2021stable} induces Lyapunov stability in neural ordinary differential equations (ODEs), making the resulting models robust to adversarial perturbations. Despite their strong empirical performance and theoretical foundations, these approaches suffer from a key limitation: the assumption of stationarity in the underlying system. However, many practical systems don't comply with this assumption, exhibiting time-varying dynamics or stochasticity in system parameters. For example, the topology of a power distribution network is often altered during the system operation for different objectives such as fault isolation or load balancing, leading to a time-varying dynamics. Another example is the seasonal variability in solar irradiance, which introduces parametric uncertainty in PV-inverter-integrated power systems. 
Among all physical properties, \textit{stability} holds the most critical importance for ensuring reliable system operation. However, existing works overlook the possibility of non-stationary dynamics. A common remedy like retraining a stable dynamics model from scratch whenever the system changes, requires a new training dataset and significant additional training time, rendering it impractical for time-sensitive operations. Another parallel line of research employs adaptive learning methods such as meta-learning \cite{park2022meta} or in-context learning \cite{forgione2023icl} to model non-stationary dynamics under mild assumptions. However, similar to standard neural network approaches, these methods aim to minimize squared loss errors without ensuring the stability of the learned dynamics.
As a result, \textit{learning stability-preserving dynamics models that remain effective under mild constraints in non-stationary settings remains an open and challenging problem}. 

In this paper, we study the problem of learning stability-preserving adaptive dynamics models. We propose \textbf{LILAD} (Learning In-context Lyapunov-stable Adaptive Dynamics Models), a novel framework that leverages the in-context learning (ICL) capability of a large language model (LLM) to approximate nonstationary dynamics, while enforcing stability through Lyapunov-based conditions. We assume parametric stochasticity in the system, with access to a data pool comprising multi-task trajectory datasets. Each dataset is generated by instantiating the system with a specific parameter sample. During the training of LILAD, we employ the ICL procedure to jointly train a dynamics model and a Lyapunov function that satisfies global exponential stability across the multi-task data pool. Specifically, we adopt an adversarial training framework that alternates between updating the dynamics model and the Lyapunov function to promote joint and efficient learning. At evaluation, LILAD receives a small test-time trajectory dataset as a prompt corresponding to a new, unseen parameter sample, and produces a stable dynamics in a zero-shot manner. Furthermore, using a bisection-based root-finding method, we compute an attenuator $\gamma$ using the trained Lyapunov and dynamics models, which strictly enforces stability in the learned dynamics model. We validate LILAD against baseline methods on several benchmark autonomous systems, including a high-dimensional example to demonstrate the adaptability and stability performance of our approach. Our codes are available at https://github.com/amitjena1992/LILAD.

\section{Related Work}
\noindent \textbf{System Identification:} System Identification encompasses a broad range of algorithms, all aimed at approximating models that accurately predict the future state of a system given its current state. Traditional modeling approaches include linear models \cite{mehra1974linear}, as well as representation using basis functions like sinusoidal series \cite{korenberg1989sinusoid}, sum-of-squares polynomials \cite{li2018sos},  kernels like quadratic and Gaussian functions \cite{greblicki2003quadgauss}. Another class of works focuses on block-oriented methods, including Wiener models \cite{hagenblad1999weiner}, Hammerstein models \cite{chaoui2005hammerstein}, and hybrid Wiener–Hammerstein structures \cite{sun2022robust}, which combine linear blocks and static nonlinearity to approximate nonlinear systems. In parallel, a different line of study utilizes iterative algorithms such as genetic algorithms \cite{chang2007genetic} and Monte Carlo methods \cite{schon2015montecarlo}, as well as Bayesian approaches like Gaussian process regression \cite{bijl2017GPR} to approximate nonlinear dynamics in a data-efficient and uncertainty-aware manner. More recent advances advocate for neural network-based techniques that are highly expressive and free from restrictive model assumptions. Examples include convolutional neural networks (CNNs) \cite{wu2019cnn}, Neural ODEs \cite{quaglino2019NODE}, and physics-informed neural networks (PINNs) \cite{stiasny2021PINN}.\\

\noindent\textbf{Stability-preserving System Identification:}
Identifying stable systems is of utmost priority, particularly when the learned model is intended for deployment in a safety-critical system. These concerns have laid the foundation for a dedicated section of work. In this setting, various methods have been proposed that rigorously guarantee stability in the approximated dynamics. Classical approaches include stable spline kernels \cite{pillonetto2010spline}, Lyapunov-based frameworks \cite{lyashevskiy1994nonlinear}, subspace methods \cite{maciejowski1995subspace}, convex parameterizations for stable systems \cite{tobenkin2010convex}, and approximation of stable positive systems \cite{umenberger2016positive}. With the advent of deep learning, this line of research has advanced with learning-based techniques like Lyapunov-based neural network methods \cite{kolter2019learning, lawrence2020almost}, stable neural ODEs \cite{kang2021stable}, stability-aware deep generative models \cite{urain2020imitationflow}, and Lyapunov-stable deep equilibrium models \cite{chu2024equilbrium}.\\

\noindent\textbf{Adaptive and Robust System Identification:}
The presence of stochasticity in a dynamical system necessitates adaptive and robust methods for reliable operation and control. This challenge has been addressed through both classical and modern approaches. Classical adaptive linear methods include least mean square (LMS)-based techniques \cite{chen2009LMS}, adaptive Wiener models \cite{ogunfunmi2007adaptive}, and robust error-domain model falsification (EDMF) methods \cite{pasquier2015robust}. More recently, learning-based approaches have gained traction, including meta-learning \cite{park2022meta} and in-context learning (ICL) \cite{forgione2023icl, jena2025llm}, which offer greater flexibility in handling complex, nonstationary dynamics.

Despite progress in learning-based stable and adaptive modeling separately, no prior work has unified both objectives. To the best of our knowledge, this is the first neural network-based approach to address stable adaptive system identification, demonstrating improved performance over standard adaptive baselines.
\section{Preliminaries}
\subsection{Stability of Discrete-Time Dynamical Systems}
We summarize the fundamental stability results for a generic discrete-time deterministic system of the form:
\begin{equation}
\label{eq:gen_dyn}
    x_{k+1} = f(x_k),
\end{equation}
where $x_k \in X \subset \mathbb{R}^d $ denotes the system state at discrete time $k$, and $f(.): \mathbb{R}^d \rightarrow \mathbb{R}^d$ represents the dynamics. The equilibrium point of the system \eqref{eq:gen_dyn} is assumed to be $x = 0$.

\begin{definition}[Global exponential Stability (GES) \cite{grujic1974exponential}]
\label{def:GES}
The equilibrium $x = 0$ is said to be globally exponentially stable if and only if there exist constants $m  > 1$, $\alpha >0$ such that
\begin{equation}
\label{eq:gen_exp_stab}
    \|x_k\|_2 \leq m\|x_0\|_2 e^{-\alpha k}, \forall k \in \mathbb{N},
\end{equation}
for any initial state $x_0 \in X$.
\end{definition}
\noindent GES implies global asymptotic stability where $x_k \rightarrow 0$ as $k \rightarrow \infty$. Additionally, it indicates 
that the rate of decay is exponential in time with decay rate $\alpha$. 
\begin{definition}[Lyapunov functions for discrete-time systems 
\cite{grujic1974exponential}]
\label{def:Lyap}
    The equilibrium $x = 0$ is globally exponentially stable for system \eqref{eq:gen_dyn} if and only if there exists a scalar-valued function $V : \mathbb{R}^d \rightarrow \mathbb{R}$ with the following properties:
    \begin{subequations}
    \label{cond:Lyap}
    \begin{align}
        &V(x)\text{ is continuous in } \mathbb{R}^d, \label{cond:Lyap1}\\
        & c_1\|x\|_2^2 \leq V(x) \leq c_2\|x\|_2^2, \quad \forall x \in X, \label{cond:Lyap2}\\
        & V(x_{k+1}) \leq \beta V(x_k),\quad \forall x_k \in X,~~ \forall k \in \{0,1,\hdots\} \label{cond:Lyap3}
        \end{align} 
    \end{subequations}
    where $c_1 > 0$, $c_2 > 0$,  $0 < \beta < 1$.
\end{definition}
\noindent Condition $\eqref{cond:Lyap2}$ implies that $V(x) > 0$ for all $x \in X \backslash \{0\}$ and $V(x) = 0$ if and only if $x = 0$, consistent with other variants of Lyapunov-based definitions, such as those in \cite{bof2018lyapunov}. Additionally, many existing works \cite{lawrence2020almost} conventionally consider $\beta = 1- \alpha$ where $\alpha \in (0,1)$ is the decay rate of the dynamics as defined in Definition \ref{def:GES}. The existence of a Lyapunov function satisfying the conditions in Definition \ref{def:Lyap} ensures that the system \eqref{eq:gen_dyn} is globally exponentially stable, as stated in Definition \ref{def:GES}. 

\subsection{In-context Learning}
We briefly summarize the key principles of ICL, the underlying basis of our proposed LILAD framework. 
Let $\mathcal{T}$ be a family of tasks, and let $\mathcal{P}_{\tau}$ denote a meta-distribution across $\mathcal{T}$. Any task $\tau \in \mathcal{T}$ induces a task distribution $P_{Z|\tau}$ on the domain $Z = U \times V$, where $U$ and $V$ are the input space and output space, respectively. Given a random prompt
\begin{equation}
    \hspace{-0.5em} \mathscr{P}^{\tau} = \{u_1^\tau, v_1^\tau, u_2^\tau, v_2^\tau, \hdots, u_n^\tau, v_n^\tau, u_{n+1}^\tau\}
\end{equation}
which comprises $n$ in-context input-output pairs and a query input $u_{n+1}^\tau$, where each $(u_i^\tau, v_i^\tau) \sim P_{Z|\tau}$ i.i.d. for $i \in \{1, \hdots, n\}$. Additionally, the query input and true output $(u_{n+1}^\tau, v_{n+1}^\tau) \sim P_{Z|\tau}$ are also drawn independently from the same distribution. In this setting, an ICL model $M(.)$ aims to predict $v_{n+1}^\tau$ with an average error bounded by $\varepsilon$. Formally,
\begin{equation}
    \label{ICL}
    \mathbb{E}_{P_\tau}\mathbb{E}_{{P_{Z|\tau}}^{n+1}}\big[ \ell(M(\mathscr{P}^{\tau}), v_{n+1}^\tau)\big] \leq \varepsilon,
\end{equation}
where $\ell(.,.)$ is an appropriate loss function. In this process, the model $M(\cdot)$ adapts to any task $\tau$ solely through the prompt $\mathscr{P}_\tau$, without requiring any gradient updates. 

\section{ICL of Adaptive and Stable Dynamics}
We consider a discrete-time autonomous system of the form
\begin{equation}
    \label{eq:dis_dyn}
    x_{k+1} = f_{\vartheta}(x_k),
\end{equation}
where $x_k \in X \subset \mathbb{R}^d$ denotes the fully observed time-evolving state of the system, $f_{\vartheta} : \mathbb{R}^d \rightarrow \mathbb{R}^d$ represents the parametrized dynamics, and $\vartheta \in \mathbb{R}^\mathfrak{d}$ is a stochastic system parameter governed by an unknown probability distribution $\mathcal{P}_{\vartheta}$. Although the distribution $\mathcal{P}_{\vartheta}$ is explicitly unknown, its variability is experienced through a finite set of parameter samples $\mathcal{D}^{\vartheta} := \{\tilde\vartheta_1, \tilde\vartheta_2, \hdots, \tilde\vartheta_M\}$, where each $\tilde\vartheta_i$ leads to a distinct deterministic dynamics $f_{\tilde\vartheta_i}(.)$. We assume that the origin is the equilibrium point of each such dynamical system, i.e., $f_{\tilde\vartheta_i}(0) = 0$ holds for every $\tilde\vartheta_i \in \mathcal{D}^{\vartheta}$. Next, we make the following assumption regarding the stability of the dynamics, which is fundamental to our formulation.
\begin{assumption}[GES]
\label{assum:stab_dyn}
Let $\tilde\vartheta_i \in \mathcal{D}^{\vartheta}$ be a parameter sample, and let $f_{\tilde\vartheta_i}(.)$ be its corresponding induced dynamics. Then for some $m>1$ and $\alpha >0$, the dynamics $x_{k+1} = f_{\tilde \vartheta_i}(x_k)$ satisfies GES condition as defined in Definition \ref{def:GES}.
\end{assumption}
\noindent Assumption \ref{assum:stab_dyn} ensures that any induced dynamics $f_{\tilde\vartheta_i}(.)$ is globally exponentially stable, meaning its trajectories converge to the equilibrium at an exponential rate for any initial condition. This type of assumption is widely used in the stability analysis literature \cite{grujic1974exponential, geiselhart2014alternative} in various forms. Our problem formulation is developed under this assumption.

Approximating the nonstationary dynamics as stated in \eqref{eq:dis_dyn} from trajectory data under parametric uncertainty is challenging, especially when stability must be preserved. We address this by proposing an adaptive ICL-based method that ensures stable dynamics. In our set-up, we assume access to a training data pool $\mathcal{D}^\text{train} := \{\mathcal{D}^{i}\}_{i=1}^{M} = \{\{(x_{i,j}, f_{\tilde \vartheta_i}(x_{i,j}))\}_{j=1}^N\}_{i=1}^M$, consisting of $M$ datasets. Each dataset $\mathcal{D}^{i}$ corresponds to a distinct dynamics $f_{\tilde\vartheta_i}(.)$ induced by $\tilde \vartheta_i \in \mathcal{D}^\vartheta$, and contains $N$ state–next-state pairs, generated from rollouts initialized at random initial states. We note that the state-next-state pairs have been randomly shuffled to remove the temporal dependencies between samples, resulting in i.i.d. datasets suitable for standard learning algorithms. Thus, $x_{i,j}$ and $x_{i,j+1}$ are not sequential states but are independently sampled from the dataset.

We present LILAD, an ICL-based framework that jointly learns adaptive dynamics and Lyapunov models. LILAD training proceeds by alternating between two phases: (i) updating the Lyapunov model using a \textit{Lyapunov loss} computed with the current fixed dynamics model, and (ii) updating the dynamics model using a combination of standard MSE and a \textit{dynamics correction loss} derived from the fixed Lyapunov model. This process continues until both models converge, resulting in a stability-aware dynamics model and a corresponding Lyapunov function.

Both models jointly satisfy the Lyapunov stability constraints (as defined in Definition \ref{def:Lyap}) for most samples in the training data pool. However, like many neural network-based methods, the \textit{out-of-task} and \textit{out-of-sample} performances of LILAD remain uncertain. We note that due to the design of our Lyapunov model architecture, the first two conditions \eqref{cond:Lyap1} and \eqref{cond:Lyap2} are inherently satisfied. Consequently, we focus on ensuring the satisfaction of the final condition, \eqref{cond:Lyap3}, which enforces exponential stability. To this end, we utilize a bisection-based root-finding method to compute an attenuation factor $\gamma \in [0,1)$ that guarantees Lyapunov decrease across out-of-task and out-of-sample scenarios. The key idea is to sufficiently attenuate the dynamics model to ensure a suitable decrease in the Lyapunov value between two consecutive states, thereby satisfying the exponential stability condition. It is important to note that $\gamma$ is not computed globally; instead, it is determined separately for each state, making it inherently state-dependent. In what follows, first we describe the LILAD set-up in detail, and then proceed to the training and evaluation procedures of LILAD. 

\subsection{LILAD Framework Set-up}
We begin by introducing the following notations for our adaptive system identification framework:
\begin{subequations}
    \begin{align}
        & \mathscr{P}^i  := \{x_{i,1}, f_{\tilde \vartheta_i}(x_{i, 1}), \hdots, x_{i,n}, f_{\tilde\vartheta_i}(x_{i, n}), x_{i, n+1}\} \\
        & \mathscr{P}_j^i  := \{x_{i,1}, f_{\tilde\vartheta_i}(x_{i, 1}), \hdots, x_{i,j}, f_{\tilde\vartheta_i}(x_{i, j}), x_{i, j+1}\}, \\
        & \mathscr{C}^i = \mathscr{P}^i \setminus \{x_{i, n+1}\}, \qquad \mathscr{C}_j^i = \mathscr{P}_j^i \setminus \{x_{i, j+1}\},
    \end{align}
\end{subequations}
where $\mathscr{P}^i$ denotes the full prompt for the $i$-th task, consisting of $n$ in-context state–next-state pairs $\mathscr{C}^i$ and a query $x_{i,n+1}$. This prompt is constructed by randomly selecting $n$ pairs from the $N$ available state–next-state pairs associated with $i$-th task. Similarly, $\mathscr{P}_j^i$ is the $(i,j)$-th prompt prefix, comprising the first $j$ in-context pairs $\mathscr{C}_j^i$ and query $x_{i,j+1}$. We refer to $\mathscr{C}^i$ and $\mathscr{C}_j^i$ as $i$-th context and $(i,j)$-th context prefix, respectively. In this setting, the proposed ICL-based dynamics model $G_\theta$ takes any prompt prefix $\mathscr{P}_j^i$ as input and predicts $f_{\tilde \vartheta_i}(x_{i, j+1})$. In parallel, the ICL-based Lyapunov model $V_\phi$ also receives $\mathscr{P}_j^i$ and outputs the Lyapunov value corresponding to $x_{i, j+1}$. We present these predictions as follows:
\begin{subequations}
\label{eq:lilad_pred}
\begin{align}
    &\hat{f}(x_{i, j+1}) := G_\theta(\mathscr{P}_j^i) = G_\theta(x_{i,j+1} ~|~ \mathscr{C}_j^i), \\
    & \hat{V}(x_{i, j+1}) := V_\phi(\mathscr{P}_j^i) = V_\phi(x_{i,j+1} ~|~ \mathscr{C}_j^i),
\end{align}
\end{subequations}
where $\hat{f}(x_{i, j+1})$ and $\hat{V}(x_{i, j+1})$ denote the predicted dynamics and Lyapunov values at $x_{i,j+1}$ conditioned on the context $\mathscr{C}_j^i$. We note that the context prefix $\mathscr{C}_0^i$ is empty for all cases. However, this does not inhibit prediction, as the ICL models can still generate outputs based on the query input alone, without any contextual information. 

The architectures for $G_\theta$ and $V_\phi$ follow a common structure comprising a linear input layer, a GPT-2 transformer block, and a linear output layer. In both models, the input layer tokenizes the prompt and passes it to the transformer block, which generates a context-aware embedding for a prediction on the query. Then, this prediction is passed through the output layer to produce the final model output. While standard transformer architectures ensure that the first Lyapunov constraint \eqref{cond:Lyap1} holds any general query conditioned on a fixed context, the second condition \eqref{cond:Lyap2} isn't inherently satisfied. We make the following architectural modifications to overcome this challenge.
\subsubsection{Enforcing Semi-definiteness via Output Warping}
We employ a scaled $\tanh$ activation after the output layer to constrain the Lyapunov output within the range $[-c, c]$ for some $c > 0$, which improves the numerical tractability of integrating $V_\phi$ into the training pipeline. To further enforce the semi-definiteness of $V_\phi$, we adopt a smoothed ReLU activation inspired by \cite{kolter2019learning}, defined as:
\begin{equation}
    \sigma(x)= \begin{cases}
    0 & \text { if } x \leq 0 \\
    x^2 / 2 \delta & \text { if } 0<x<\delta \\
    x-\delta / 2 & \text { otherwise },
    \end{cases}
\end{equation}
where $\delta > 0$ is a tunable parameter regulating the smoothness of the function. Then, the following output warping scheme is applied to ensure that $V_\phi$ remains positive semi-definite:
\begin{align}
    V_\phi(x_{i,j+1} ~|~ \mathscr{C}_j^i) = ~& \sigma\big(c\cdot\tanh(V^\text{raw}_\phi(x_{i,j+1} ~|~ \mathscr{C}_j^i))   \nonumber \\
       -~ c\cdot\tanh&(V^\text{raw}_\phi(0 ~|~ \mathscr{C}_j^i))\big)  + \epsilon~\|x_{i, j+1}\|^2,
    \end{align}
where $V^\text{raw}_\phi(. ~|~ \mathscr{C}_j^i)$ denotes unwarped Lyapunov output, and $\epsilon > 0$ ensures a unique minima at $x_{i, j+1} = 0$.
Now, it's straightforward to prove that with the positive semi-definiteness and boundedness constraints imposed, $ V_\phi(x_{i,j+1} ~|~ \mathscr{C}_j^i)$ satisfies the second Lyapunov condition \eqref{cond:Lyap2} for any $x_{i, j+1} \in X$. In the remainder of this section, we separately describe the training and evaluation procedures for LILAD.
\subsection{LILAD Training}
We adopt an adversarial training procedure consisting of two stages. In the first stage, the dynamics model $G_\theta$ is frozen while $V_\phi$ is trained. In the second phase, $V_\phi$ is fixed and $G_\theta$ is updated. This alternating process progresses until both models converge in terms of their respective loss functions. The dynamics and Lyapunov losses, represented by $\mathcal{L}^\text{Dyn}(\theta)$ and $\mathcal{L}^\text{Lyap}(\phi)$, are defined as follows:
\begin{subequations}
    \label{eq:loss_funs}
    \begin{align}
        & \mathcal{L}^\text{Dyn}(\theta)  = \dfrac{1}{M(n+1)} \sum_{i=1}^M \sum_{j=0}^n  \|G_\theta(x_{i, j+1}|\mathscr{C}_j^i)\nonumber\\ 
         & - f_{\tilde\vartheta_i}(x_{i, j+1})\| + \lambda \max\big\{ V_\phi\big(G_\theta(x_{i, j+1}|\mathscr{C}_j^i) \big|\mathscr{C}_j^i\big)\nonumber \\
         & \hspace{10em} - \beta V_\phi(x_{i, j+1} | \mathscr{C}_j^i ),~~0\big\}, \label{eq:loss_funs_dyn}\\
        &\mathcal{L}^\text{Lyap}(\phi) = \dfrac{1}{M(n+1)} \sum_{i=1}^M \sum_{j=0}^n  \max\big\{ \nonumber \\
        &  \hspace{1em}  V_\phi\big(G_\theta(x_{i, j+1}|\mathscr{C}_j^i) \big|\mathscr{C}_j^i\big) - \beta V_\phi(x_{i, j+1} | \mathscr{C}_j^i ) ,~~ 0\big\}, \label{eq:loss_funs_lyap}
    \end{align}
\end{subequations}
where $\lambda > 0$ is a small regularization constant, and $\beta$ is the exponential stability coefficient, as defined in \eqref{cond:Lyap3}. For the dynamics loss $\mathcal{L}^\text{Dyn}(\theta)$, the first term inside the summation is a standard MSE between the context-conditioned predicted next state and the true next state. The second term penalizes violations of the Lyapunov condition \eqref{cond:Lyap3}, serving as a dynamics correction term while $V_\phi$ is held fixed. Similarly, for $\mathcal{L}^\text{Lyap}(\phi)$, the dynamics model $G_\theta$ is frozen and the loss captures the extent to which the current $V_\phi$ violates the Lyapunov condition \eqref{cond:Lyap3}. Thus, both models contribute to the evaluation of both loss functions, enabling a joint and efficient training process. Upon convergence, the training algorithm yields a stability-aware dynamics model $G_{\theta^{*}}$ and a Lyapunov function $V_{\phi^*}$, both optimized with respect to their respective loss functions. The complete LILAD training procedure is summarized in Algorithm \ref{algor:LILAD_train}. 

\subsection{LILAD Evaluation and Stability Enforcement}
The optimal models $G_{\theta^{*}}$ and $V_{\phi^*}$ obtained from LILAD training jointly satisfy Lyapunov constraints for the majority of samples in the training prompts $\{\mathscr{P}_i\}_{i=1}^M$. However, deploying such models in safety-critical systems demands rigorous stability guarantees, which go beyond empirical performance and require stronger, verifiable assurances. We address this bottleneck by presenting a bisection-based root-finding method to compute a state-dependent attenuator $\gamma$, which rigorously enforces Lyapunov stability for arbitrary states, thereby generalizing to both out-of-task and out-of-sample scenarios. Our approach builds on the technique introduced in \cite{lawrence2020almost}, which was originally developed for stationary dynamics. We extend this idea to the non-stationary setting considered in this work. We note that, contrary to many prior works \cite{kolter2019learning, chu2024equilbrium}, our method does not assume input-output convexity of the Lyapunov model $V_{\phi}$, a property that is difficult to verify for transformer-based LLMs.

The strategy followed in this step is to first check if the context-conditioned next-state prediction by the optimal dynamics model satisfies the third Lyapunov condition \eqref{cond:Lyap3}. Specifically, at any arbitrary query state $x \in X$, given a new test-time context prefix $\mathscr{C}_j^{M+1}$ corresponding to a new $(M+1)$-th task, we verify if $V_{\phi^*}\big(G_{\theta^{*}}(x|\mathscr{C}_j^{M+1}) \big|\mathscr{C}_j^{M+1}\big) - \beta V_{\phi^*}(x | \mathscr{C}_j^{M+1}) <0$. If this condition is satisfied at $x$, we set $\gamma = 1$. Otherwise, we compute a suitable $\gamma$ as follows.
\begin{align}
    \label{eq:gamma_compute}
    &\text{Find the largest}~~\gamma \in [0,1)~~\text{such that} \nonumber\\
    &\hspace{-0.5em}V_\phi^{*}\big(\gamma\cdot G_\theta^{*}(x|\mathscr{C}_j^{M+1}) \big|\mathscr{C}_j^{M+1}\big) - \beta V_\phi^{*}(x | \mathscr{C}_j^{M+1} ) = 0.
\end{align}
This results in a one-dimensional root-finding problem, solvable with bisection, that ensures exponential stability of the modulated dynamics $\gamma \cdot G_\theta^{*}$. We utilize the intermediate value theorem (IVT) to guarantee the existence of a valid $\gamma$ satisfying \eqref{eq:gamma_compute}. The detailed proof of this claim has been deferred to the Appendix. It is important to note that $\gamma$ is state-dependent, i.e., it varies with each query state and is not shared across tasks or samples. Hence, we denote it as $\gamma(x)$. Finally, the stability-constrained optimal dynamics model is represented as $\gamma(x)\cdot G_\theta^{*}(x | \mathscr{C}_j^{M+1})$, where
\begin{equation}
    \gamma(x) = 
    \begin{cases}
     & \hspace{-1em}1 \hspace{2.5em}\text{if}~~ V_{\phi^{*}}\big(G_{\theta^*}(x~|~\mathscr{C}_j^{M+1}) \big|\mathscr{C}_j^{M+1}\big) \\
     &\hspace{6em}- \beta V_{\phi^{*}}(x | \mathscr{C}_j^{M+1} ) <0 \\
    & \hspace{-1em}\text{solution of \eqref{eq:gamma_compute}} \quad \text{otherwise}.
    \end{cases}
\end{equation}
\begin{algorithm}[t]
    \begin{algorithmic}[1]
\caption{ LILAD Training Algorithm}
\label{algor:LILAD_train}
    \STATE {\textbf{Input:}} Offline prompts $\{\mathscr{P}_i\}_{i=1}^{M}$, with each $\mathscr{P}_i$ comprising $n$ state-next-state pairs; switching interval $K^\text{sw}$. 
    \STATE {\textbf{Initialize:}} Dynamics and Lyapunov model parameters  $\theta^{(0)}, \phi^{(0)}$ 
   \FOR{$k = 0, \hdots, K-1$}
   \STATE Randomly sample a batch of prompt prefixes $\{\mathscr{P}^{p}_i\}_{i=1}^M$, where $p < n$.
   \IF{$\lfloor(k / K^\text{sw})\rfloor \bmod 2= 0$} 
        \STATE Compute and update dynamics loss $\mathcal{L}^{\text{Dyn}}(\theta^{(k)})$ as described in \eqref{eq:loss_funs_dyn} over $\{\mathscr{P}^{p}_i\}_{i=1}^M$ while fixing $V_{\phi^{(k)}}$. 
    \ELSE
        \STATE Compute and update Lyapunov loss $\mathcal{L}^{\text{Lyap}}(\phi^{(k)})$ as described in \eqref{eq:loss_funs_lyap} over $\{\mathscr{P}^{p}_i\}_{i=1}^M$ while freezing $G_{\theta^{(k)}}$.
    \ENDIF
   \ENDFOR
  \STATE {\textbf{Output:}} $\theta^{*} = \theta^{(K)}$, $\phi^{*} = \phi^{(K)}$  
\end{algorithmic}
\end{algorithm}

\section{Experiments}

\begin{figure*}[t]
    \vspace{-1em}
     \centering
     \begin{subfigure}[b]{0.32\textwidth}
         \includegraphics[width=\textwidth]{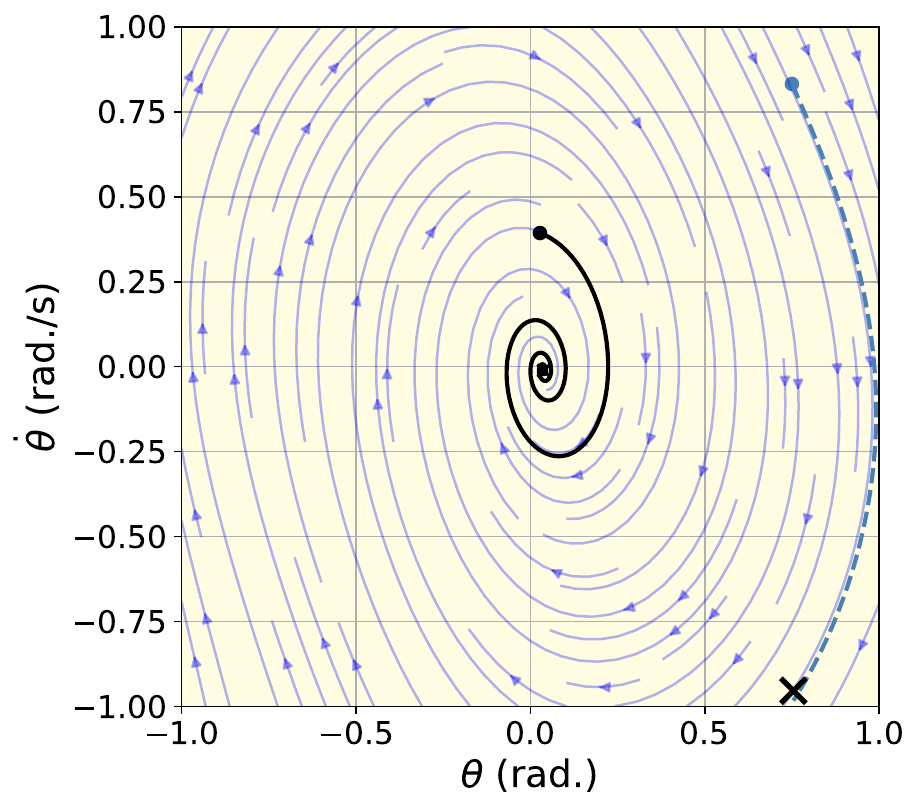}
         \caption{ICL}
         \label{fig:icl_pp}
     \end{subfigure}
     \hfill
     \begin{subfigure}[b]{0.32\textwidth}
         \includegraphics[width=\textwidth]{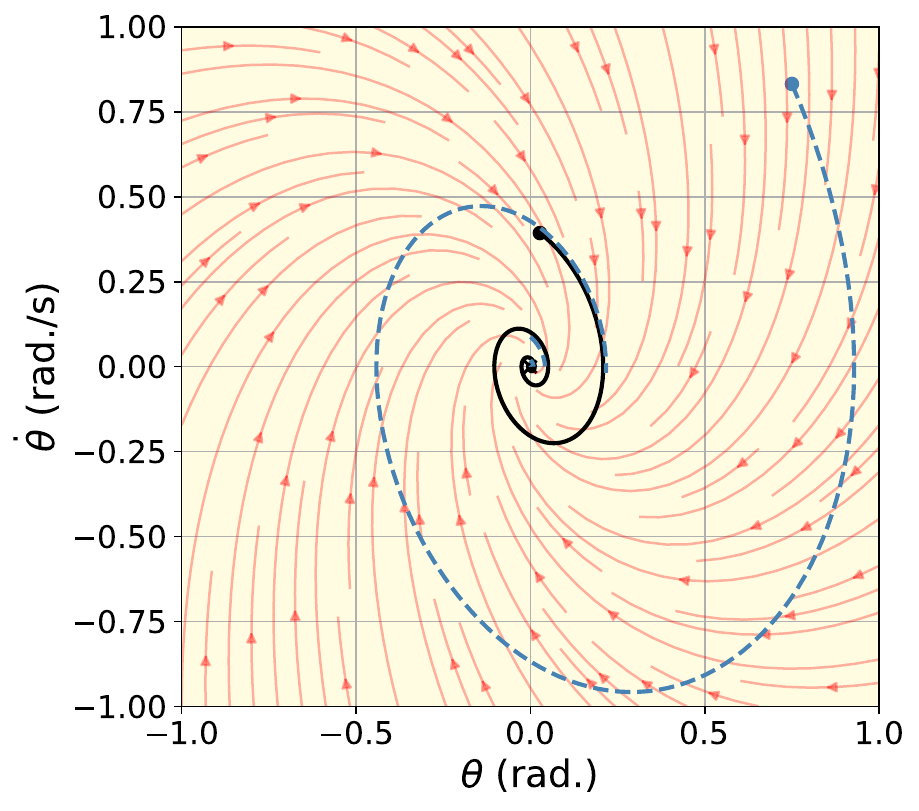}
         \caption{LILAD}
         \label{fig:lilad_pp}
     \end{subfigure}
     \hfill
     \begin{subfigure}[b]{0.32\textwidth}
         \includegraphics[width=\textwidth]{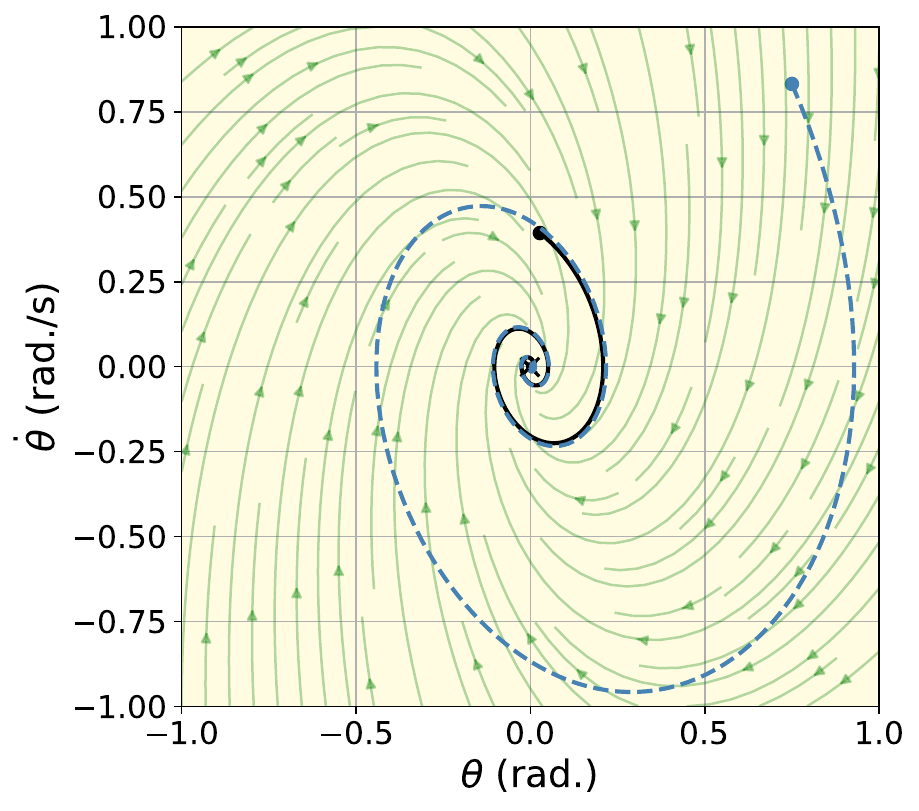}
         \caption{Ground Truth}
         \label{fig:gt_pp}
     \end{subfigure}
     \caption{\textit{Performance Comparison of LILAD}. For the Simple Pendulum system with stochastic parameters $(g, l, b)$, the proposed LILAD method adapts more effectively to test-time instances than ICL. (a) ICL yields a sub-optimal approximation of the test-time vector field, causing predicted trajectories to deviate beyond the specified boundary. (b) In contrast, LILAD achieves a more accurate approximation, with guaranteed convergence of the predicted trajectories to the equilibrium. (c) Ground truth: the true vector field and corresponding reference trajectories of the test-time system.}
     \label{fig:pendulum_2d}
\end{figure*}
\begin{figure}[h]
    \centering
    \includegraphics[width=1\linewidth]{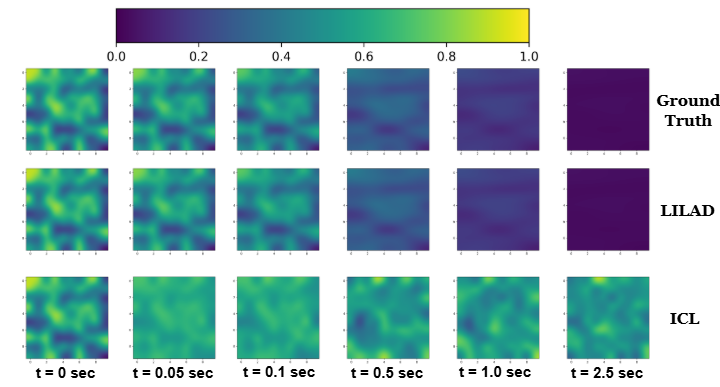}
    \caption{LILAD scales effectively to a high-dimensional system derived from a discretized reaction-diffusion PDE. Its learned surrogate predicts trajectories that converge to the origin, matching the test-time ground truth. In contrast, ICL does not guarantee convergence to the origin. Each image shows the spatial temperature profile under a fixed test-time diffusion coefficient.}
    \label{fig:LILAD-high-dim}
    \vspace{-2em}
\end{figure}
\begin{table*}[h]
\caption{Comparison of MAE and RMSE of all Methods Across Dynamical Systems}
\label{tab:performance}
\vspace{-1em}
\begin{center}
\begin{small}
\begin{sc}
\begin{tabular}{lclccccr}
\hline
Env. & dim & Metric & ICL & MAML & CVaR & Stable-Linear & LILAD\\
\hline
\multirow{2}{*}{SP} 
& \multirow{2}{*}{2} & MAE   & 0.018 $\pm$ 0.068 & 0.023$\pm$0.014 & 0.085 $\pm$ 0.022 & 0.065$\pm$0.016 & \textbf{0.004 $\pm$ 0.020} \\
&                    & RMSE & 0.022 $\pm$ 0.080 & 0.034 $\pm$ 0.024  & 0.17 $\pm$ 0.09 & 0.13$\pm$0.021 &  \textbf{0.011 $\pm$ 0.012} \\
\hline
\multirow{2}{*}{DP} 
& \multirow{2}{*}{4} & MAE  & 0.039 $\pm$ 0.029 & 0.022 $\pm$ 0.010 & 0.12 $\pm$ 0.05 & 0.17 $\pm$ 0.16 & \textbf{0.011 $\pm$ 0.008} \\
&                   & RMSE & 0.061$\pm$ 0.028 & 0.032 $\pm$ 0.008 & 0.22 $\pm$ 0.12 & 0.42$\pm$0.25 &  \textbf{0.021 $\pm$ 0.018} \\
\hline
\multirow{2}{*}{MG} 
& \multirow{2}{*}{5} & MAE  & 0.005 $\pm$ 0.010 & 0.007 $\pm$ 0.009 & 0.014 $\pm$ 0.028 & 0.011 $\pm$ 0.004 & \textbf{0.005 $\pm$ 0.002} \\
&                    & RMSE & \textbf{0.009 $\pm$ 0.012} & 0.009 $\pm$ 0.013 &  0.042 $\pm$ 0.068 & 0.017 $\pm$ 0.004 &  0.012 $\pm$ 0.009\\
\hline
\multirow{2}{*}{SEIR} 
& \multirow{2}{*}{8} & MAE  & 0.022 $\pm$ 0.017 & 0.032 $\pm$ 0.027 & 0.077 $\pm$ 0.05 & 1.049 $\pm$ 0.818 & \textbf{0.017 $\pm$ 0.019} \\
&                    & RMSE & 0.031 $\pm$ 0.011 & 0.043 $\pm$ 0.067 & 0.18 $\pm$ 0.13 & 1.617 $\pm$ 1.404 & \textbf{0.022 $\pm$ 0.029} \\
\hline
\multirow{2}{*}{PDE-SM} 
& \multirow{2}{*}{100} & MAE  & 6.354 $\pm$ 0.050 & \textendash & \textendash & \textendash & \textbf{0.060 $\pm$ 0.002} \\
&                    & RMSE & 7.478 $\pm$ 0.039 & \textendash & \textendash & \textendash & \textbf{0.249 $\pm$ 0.008} \\
\hline
\end{tabular}
\end{sc}
\end{small}
\end{center}
\vspace{-1em}
\end{table*}

The effectiveness of our proposed LILAD is demonstrated through comparisons with standard baselines across multiple benchmark systems. Specifically, we consider five autonomous systems, each of which is inherently stable and either does not require external control or is already equipped with a stabilizing controller, such as pre-defined droop controllers in the case of the networked microgrid. As explained in earlier sections, we focus on the system identification task under parametric uncertainty. The benchmark systems used in our simulations include: (i) a one-link pendulum, (ii) a two-link pendulum, (iii) a networked microgrid, (iv) an SEIR epidemic model, and (v)  a high-dimensional surrogate model of a reaction-diffusion partial differential equation (PDE).
\subsection{Baselines}
We consider four baseline methods for performance comparison. Our choice of baselines includes two adaptive algorithms, one robust algorithm, and a non-adaptive stable system identification method. The adaptive baselines consist of standard ICL and model-agnostic meta-learning (MAML) \cite{finn2017MAML}, both of which have been widely used in adaptive regression and non-linear system identification tasks \cite{forgione2023icl, park2022meta}. In case of ICL, after a full course of in-context training, the ICL model makes predictions on a general query state given a test-time context prefix $\mathscr{C}_j^{M+1}$, similar to our proposed LILAD framework. In contrast, MAML does not depend on the ordering of $\mathscr{C}_{j}^{M+1}$; it treats the test-time context as an unordered set of examples and performs gradient-based fine-tuning to adapt to new tasks. For the robust baseline, we adopt a Conditional Value-at-Risk (CVaR)-based regression method \cite{rockafellar2000cvar}, which employs a risk-sensitive max-min formulation to learn a system identification model that performs well under worst-case scenarios. Specifically, we follow a neural network-based approach that, instead of minimizing the average loss, identifies the top $k$ worst-case losses, and optimizes over their average value. This method doesn't require a test-time prompt as it directly optimizes in a risk-aware manner during training, yielding a fixed model that generalizes to uncertain or adverse conditions. Finally, we include a linear stable system identification approach (referred to as stable-linear) that enforces stability by constraining the eigenvalues of the learned state-transition matrix to lie in the unit circle. We note that this method serves as a non-adaptive and non-robust baseline, as it is directly computed on the test-time context prefix. Finally, under parametric uncertainty, each method (including LILAD) is evaluated on its ability to adapt to new test-time tasks while maintaining stability in the identified system. 

\subsection{Simulations}

We use mean absolute error (MAE) and root mean squared error (RMSE) as performance metrics to assess the efficacy of each method. Specifically, we evaluate a system identification method by first rolling out trajectories using the model adapted to the test-time task and then computing the MAE and RMSE between these predicted trajectories and the ground-truth trajectories generated from the actual test-time dynamics. To account for parametric uncertainty during evaluation, we instantiate five test-time systems by drawing five random parameter samples. For each of these test-time systems, we fix four initial states, which are used to generate both the ground truth trajectories and predicted trajectories. Finally, for each system, we compute a total of twenty MAE and RMSE values, one for each trajectory. The mean and standard deviation of these MAE and RMSE values for each system are reported in Table \ref{tab:performance}. In this comparison table, we use shorthands SP, DP, MG, SEIR, PDE-SM to denote the simple pendulum, double pendulum, networked microgrids, SEIR, and the PDE surrogate system, respectively.

We consider a simple pendulum setting as our first system, and assume the pendulum's length ($l$), damping coefficient ($b$), and acceleration due to gravity ($g$) to be stochastic parameters. The system states are angular displacement from the vertically downward position $\theta$ and angular velocity $\dot{\theta}$. We compute the MAE and RMSE metrics for each method and observe that LILAD outperforms all baselines, including the stable-linear method that utilizes the test-time data directly, and is already at an advantage. Quantitatively, LILAD obtains $4.5 \times$ lower mean MAE and $2 \times$ lower mean RMSE compared to ICL, the next best-performing approach. We further compare LILAD and ICL visually in Fig. \ref{fig:pendulum_2d}, which shows the phase plots of a representative test-time system instance for LILAD, ICL, and the ground-truth dynamics. Each plot includes two rollout trajectories from the same initial points. We notice that LILAD more accurately captures the underlying vector field of the actual dynamics compared to ICL. Furthermore, both trajectories generated by LILAD stay within the bounding box as the ground truth trajectories, whereas ICL results in a trajectory that deviates beyond these bounds. This indicates that LILAD provides superior approximation quality, not only offering better accuracy but also better approximating the system's vector field. This gives our method a strong advantage for deployment in safety-critical settings. Among the rest of the baselines, all significantly perform worse than both ICL and LILAD, with the exception of MAML, which exceeds ICL but underperforms compared to LILAD.

Our second benchmark is  a double pendulum setting, where the two masses $m_1$ and $m_2$, the length of two links $l_1$ and $l_2$, damping coefficients $b_1$ and $b_2$ are assumed to take stochastic values. The states are $\theta_1$, $\theta_2$, $\dot\theta_1$, and $\dot \theta_2$, that represent the angular displacements and velocities. We evaluate all methods using the MAE and RMSE and observe that LILAD consistently supersedes all baselines in both metrics, mimicking its strong performance in the simple pendulum system. Specifically, LILAD fetches $2 \times$ lower mean MAE and $1.5 \times$ lower mean RMSE than MAML, which is the next best method for this setting. The ICL falls behind both MAML and LILAD, but yields substantially better results than CVaR and stable-linear approaches.

The third benchmark is networked-microgrid environment consisting of five microgrids that interact with each other through power electronic interfaces at the points of common coupling (PCCs). Each microgrid employs a voltage angle droop controller to stabilize its local subsystem. The overall dynamics are governed by the droop control in conjunction with AC power flow constraints. The state comprises the modified voltage phase angles $[\Delta\delta_1, \hdots, \Delta\delta_5]$, where each $\Delta \delta_j = \delta_j - \delta_{j}^{*}$ denotes the difference between the $j$-th phase angle and its setpoint (i.e., equilibrium value). We assume that the droop constants $[D_{\delta_1}, \hdots, D_{\delta_5}]$ to be stochastic parameters. We evaluate all methods in terms of MAE and RMSE and find that LILAD matches ICL in mean MAE and slightly lags behind ICL in mean RMSE. This behavior is attributed to LILAD's more accurate approximation of the steady-state part, despite a minor under-approximation of the transient part. Both LILAD and ICL outperform the remaining baselines in this setting.

The fourth system models the progress of a disease outbreak in a population. It divides the general population into four groups: susceptible (S), exposed (E), infectious (I), and removed (R). We consider a scenario involving two interacting populations undergoing active vaccination campaigns. The state vector includes $S_1, E_1, I_1, R_1, S_2, E_2, I_2, R_2$ that indicate the population density in each compartment across both populations. The self transmission rates $(\beta_{11}$ , $\beta_{22})$, and recovery rates $\gamma_1$ and $\gamma_2$ are treated as stochastic parameters. Based on the computed MAE and RMSE values for all methods, we observe that LILAD outperforms all baselines, demonstrating its superior adaptability to diverse and heterogeneous environments. In sharp contrast, both CVaR and stable-linear provide the weakest performance, with stable-linear in particular severely under-approximating the system dynamics, resulting in the poorest metrics.

\subsubsection{Scalability of LILAD to High-dimensional Systems}
To demonstrate the scalability of LILAD, we take a stable reaction-diffusion PDE that governs the spatiotemporal evolution of temperature across a surface. The surface is discretized into a $10\times10$ grid, yielding 100 nodal temperatures and resulting in a dynamical system of dimension 100. We treat the diffusion coefficient $\alpha_\text{diff}$ as stochastic and evaluate LILAD and ICL across five test-time instances. Consistent with previous results, we find that ICL doesn't always guarantee the convergence of the adapted model to the origin during test time. In contrast, the predicted trajectories by LILAD always reach equilibrium, thereby closely matching the test-time ground truth. We demonstrate this behavior in Figure \ref{fig:LILAD-high-dim}, where a LILAD-based representative trajectory is visually compared and contrasted with ICL's diverging prediction using a sequence of heatmaps. Furthermore, as reported in Table \ref{tab:performance}, LILAD achieves approximately $106$ times lower mean MAE and $30$ times lower RMSE than ICL, suggesting the much higher reliability of LILAD in high-dimensional non-stationary safety-critical applications. In this setting, we exclude MAML, CVaR, and stable-linear baselines due to their lack of scalability and consistently subpar performance on large-scale systems. 
\section{Conclusion}
In this work, we proposed LILAD, a novel system identification framework that unifies stability and adaptability, two properties typically studied in isolation. Utilizing in-context learning (ICL), LILAD employs two transformer models to jointly learn a stability-aware adaptive dynamics model and a valid adaptive Lyapunov function from a multi-task trajectory pool. During evaluation, both models adapt seamlessly to new dynamics using only contextual information in the form of test-time prompts. To further ensure stability, we incorporated a state-dependent attenuator $\gamma$, which guarantees sufficient Lyapunov decrease across predicted state transitions. We validated LILAD’s effectiveness across benchmark  systems, demonstrating superior performance over adaptive, robust, and non-adaptive baselines. To the best of our knowledge, LILAD is the first neural network-based framework to achieve adaptive system identification while guaranteeing stability across diverse tasks.

In future work, we aim to extend LILAD to state estimation and particle filtering problems under parametric uncertainty, as well as explore adaptive control through ICL from multi-task data.
\section*{Acknowledgements}
This work was supported in part by the NSF AI Institute (Award No. 2112085) and the Salata Institute for Climate and Sustainability at Harvard University.

The views expressed in this paper are the opinion of the authors and do not reflect the views of PJM Interconnection, L.L.C. or its Board of Managers of which Le Xie is a member.

\bibliography{aaai2026}

@article{grujic1974exponential,
  title={Exponential stability of large-scale discrete systems},
  author={Gruji{\'c}, Lj T and {\v{S}}iljak, DD},
  journal={International Journal of Control},
  volume={19},
  number={3},
  pages={481--491},
  year={1974},
  publisher={Taylor \& Francis}
}

@article{geiselhart2014alternative,
  title={An alternative converse Lyapunov theorem for discrete-time systems},
  author={Geiselhart, Roman and Gielen, Rob H and Lazar, Mircea and Wirth, Fabian R},
  journal={Systems \& Control Letters},
  volume={70},
  pages={49--59},
  year={2014},
  publisher={Elsevier}
}

@article{bof2018lyapunov,
  title={Lyapunov theory for discrete time systems},
  author={Bof, Nicoletta and Carli, Ruggero and Schenato, Luca},
  journal={arXiv preprint arXiv:1809.05289},
  year={2018}
}

@article{greblicki2003quadgauss,
  title={Identification of discrete Hammerstein systems using kernel regression estimates},
  author={Greblicki, Wlodzimierz and Pawlak, Miroslaw},
  journal={IEEE transactions on Automatic Control},
  volume={31},
  number={1},
  pages={74--77},
  year={2003},
  publisher={IEEE}
}

@article{hagenblad1999weiner,
  title={Aspects of the identification of Wiener models},
  author={Hagenblad, Anna},
  year={1999},
  publisher={Division of Automatic Control, Department of Electrical Engineering~…}
}

@article{chaoui2005hammerstein,
  title={System identification based on Hammerstein model},
  author={Chaoui, Fatima-Zahra and Giri*, Fouad and Rochdi, Youssef and Haloua, Mohamed and Naitali, Abdessamad},
  journal={International Journal of Control},
  volume={78},
  number={6},
  pages={430--442},
  year={2005},
  publisher={Taylor \& Francis}
}

@article{sun2022robust,
  title={A robust hammerstein-wiener model identification method for highly nonlinear systems},
  author={Sun, Lijie and Hou, Jie and Xing, Chuanjun and Fang, Zhewei},
  journal={Processes},
  volume={10},
  number={12},
  pages={2664},
  year={2022},
  publisher={MDPI}
}

@article{bijl2017GPR,
  title={System identification through online sparse Gaussian process regression with input noise},
  author={Bijl, Hildo and Sch{\"o}n, Thomas B and van Wingerden, Jan-Willem and Verhaegen, Michel},
  journal={IFAC Journal of Systems and Control},
  volume={2},
  pages={1--11},
  year={2017},
  publisher={Elsevier}
}

@article{chang2007genetic,
  title={Nonlinear system identification and control using a real-coded genetic algorithm},
  author={Chang, Wei-Der},
  journal={Applied Mathematical Modelling},
  volume={31},
  number={3},
  pages={541--550},
  year={2007},
  publisher={Elsevier}
}

@article{schon2015montecarlo,
  title={Sequential Monte Carlo methods for system identification},
  author={Sch{\"o}n, Thomas B and Lindsten, Fredrik and Dahlin, Johan and W{\aa}gberg, Johan and Naesseth, Christian A and Svensson, Andreas and Dai, Liang},
  journal={IFAC-PapersOnLine},
  volume={48},
  number={28},
  pages={775--786},
  year={2015},
  publisher={Elsevier}
}

@article{korenberg1989sinusoid,
  title={A robust orthogonal algorithm for system identification and time-series analysis},
  author={Korenberg, Michael J},
  journal={Biological cybernetics},
  volume={60},
  number={4},
  pages={267--276},
  year={1989},
  publisher={Springer}
}

@article{li2018sos,
  title={Structural system identification using sum of squares (SOS) optimization for minimizing modal dynamic residuals},
  author={Li, Dan and Dong, Xinjun and Wang, Yang},
  journal={structure},
  volume={1},
  pages={4},
  year={2018}
}

@article{park2022meta,
  title={Meta-sysid: A meta-learning approach for simultaneous identification and prediction},
  author={Park, Junyoung and Berto, Federico and Jamgochian, Arec and Kochenderfer, Mykel J and Park, Jinkyoo},
  journal={arXiv preprint arXiv:2206.00694},
  year={2022}
}

@article{mehra1974linear,
  title={Optimal inputs for linear system identification},
  author={Mehra, Raman},
  journal={IEEE Transactions on Automatic Control},
  volume={19},
  number={3},
  pages={192--200},
  year={1974},
  publisher={IEEE}
}

@article{wu2019cnn,
  title={Deep convolutional neural network for structural dynamic response estimation and system identification},
  author={Wu, Rih-Teng and Jahanshahi, Mohammad R},
  journal={Journal of Engineering Mechanics},
  volume={145},
  number={1},
  pages={04018125},
  year={2019},
  publisher={American Society of Civil Engineers}
}

@article{quaglino2019NODE,
  title={Snode: Spectral discretization of neural odes for system identification},
  author={Quaglino, Alessio and Gallieri, Marco and Masci, Jonathan and Koutn{\'\i}k, Jan},
  journal={arXiv preprint arXiv:1906.07038},
  year={2019}
}

@inproceedings{stiasny2021PINN,
  title={Physics-informed neural networks for non-linear system identification for power system dynamics},
  author={Stiasny, Jochen and Misyris, George S and Chatzivasileiadis, Spyros},
  booktitle={2021 IEEE Madrid PowerTech},
  pages={1--6},
  year={2021},
  organization={IEEE}
}

@article{pasquier2015robust,
  title={Robust system identification and model predictions in the presence of systematic uncertainty},
  author={Pasquier, Romain and Smith, Ian FC},
  journal={Advanced Engineering Informatics},
  volume={29},
  number={4},
  pages={1096--1109},
  year={2015},
  publisher={Elsevier}
}

@book{ogunfunmi2007adaptive,
  title={Adaptive nonlinear system identification: The Volterra and Wiener model approaches},
  author={Ogunfunmi, Tokunbo},
  year={2007},
  publisher={Springer Science \& Business Media}
}

@inproceedings{chen2009LMS,
  title={Sparse LMS for system identification},
  author={Chen, Yilun and Gu, Yuantao and Hero, Alfred O},
  booktitle={2009 IEEE international conference on acoustics, speech and signal processing},
  pages={3125--3128},
  year={2009},
  organization={IEEE}
}

@article{pillonetto2010spline,
  title={A new kernel-based approach for linear system identification},
  author={Pillonetto, Gianluigi and De Nicolao, Giuseppe},
  journal={Automatica},
  volume={46},
  number={1},
  pages={81--93},
  year={2010},
  publisher={Elsevier}
}

@inproceedings{tobenkin2010convex,
  title={Convex optimization in identification of stable non-linear state space models},
  author={Tobenkin, Mark M and Manchester, Ian R and Wang, Jennifer and Megretski, Alexandre and Tedrake, Russ},
  booktitle={49th IEEE Conference on Decision and Control (CDC)},
  pages={7232--7237},
  year={2010},
  organization={IEEE}
}

@article{maciejowski1995subspace,
  title={Guaranteed stability with subspace methods},
  author={Maciejowski, Jan M},
  journal={Systems \& Control Letters},
  volume={26},
  number={2},
  pages={153--156},
  year={1995},
  publisher={Elsevier}
}

@inproceedings{umenberger2016positive,
  title={Scalable identification of stable positive systems},
  author={Umenberger, Jack and Manchester, Ian R},
  booktitle={2016 IEEE 55th Conference on Decision and Control (CDC)},
  pages={4630--4635},
  year={2016},
  organization={IEEE}
}

@article{kolter2019learning,
  title={Learning stable deep dynamics models},
  author={Kolter, J Zico and Manek, Gaurav},
  journal={Advances in neural information processing systems},
  volume={32},
  year={2019}
}

@article{kang2021stable,
  title={Stable neural ode with lyapunov-stable equilibrium points for defending against adversarial attacks},
  author={Kang, Qiyu and Song, Yang and Ding, Qinxu and Tay, Wee Peng},
  journal={Advances in Neural Information Processing Systems},
  volume={34},
  pages={14925--14937},
  year={2021}
}

@article{lyashevskiy1994nonlinear,
  title={Nonlinear systems identification using the Lyapunov method},
  author={Lyashevskiy, Sergey and Abel, Larry A},
  journal={IFAC Proceedings Volumes},
  volume={27},
  number={8},
  pages={271--276},
  year={1994},
  publisher={Elsevier}
}

@article{lawrence2020almost,
  title={Almost surely stable deep dynamics},
  author={Lawrence, Nathan and Loewen, Philip and Forbes, Michael and Backstrom, Johan and Gopaluni, Bhushan},
  journal={Advances in neural information processing systems},
  volume={33},
  pages={18942--18953},
  year={2020}
}

@inproceedings{chu2024equilbrium,
  title={Lyapunov-stable deep equilibrium models},
  author={Chu, Haoyu and Wei, Shikui and Liu, Ting and Zhao, Yao and Miyatake, Yuto},
  booktitle={Proceedings of the AAAI Conference on Artificial Intelligence},
  volume={38},
  number={10},
  pages={11615--11623},
  year={2024}
}

@inproceedings{urain2020imitationflow,
  title={Imitationflow: Learning deep stable stochastic dynamic systems by normalizing flows},
  author={Urain, Julen and Ginesi, Michele and Tateo, Davide and Peters, Jan},
  booktitle={2020 IEEE/RSJ International Conference on Intelligent Robots and Systems (IROS)},
  pages={5231--5237},
  year={2020},
  organization={IEEE}
}

@article{jena2025llm,
  title={LLM-Based Adaptive Distribution Voltage Regulation Under Frequent Topology Changes: An In-Context MPC Framework},
  author={Jena, Amit and Ding, Fei and Wang, Jiyu and Yao, Yiyun and Xie, Le},
  journal={IEEE Transactions on Smart Grid},
  year={2025},
  publisher={IEEE}
}

@article{forgione2023icl,
  title={From system models to class models: An in-context learning paradigm},
  author={Forgione, Marco and Pura, Filippo and Piga, Dario},
  journal={IEEE Control Systems Letters},
  volume={7},
  pages={3513--3518},
  year={2023},
  publisher={IEEE}
}

@inproceedings{finn2017MAML,
  title={Model-agnostic meta-learning for fast adaptation of deep networks},
  author={Finn, Chelsea and Abbeel, Pieter and Levine, Sergey},
  booktitle={International conference on machine learning},
  pages={1126--1135},
  year={2017},
  organization={PMLR}
}

@article{rockafellar2000cvar,
  title={Optimization of conditional value-at-risk},
  author={Rockafellar, R Tyrrell and Uryasev, Stanislav and others},
  journal={Journal of risk},
  volume={2},
  pages={21--42},
  year={2000}
}

@article{jena2022fulldistnlf,
  title={Distributed Learning of Neural Lyapunov Functions for Large-Scale Networked Dissipative Systems},
  author={Jena, Amit and Huang, Tong and Sivaranjani, S and Kalathil, Dileep and Xie, Le},
  journal={arXiv preprint arXiv:2207.07731},
  year={2022}
}

@article{turkyilmazoglu2024SEIR,
  title={Solutions to SIR/SEIR epidemic models with exponential series: Numerical and non numerical approaches},
  author={Turkyilmazoglu, Mustafa},
  journal={Computers in Biology and Medicine},
  volume={183},
  pages={109294},
  year={2024},
  publisher={Elsevier}
}

@article{arino2007seir,
  title={A final size relation for epidemic models},
  author={Arino, Julien and Brauer, Fred and Van Den Driessche, Pauline and Watmough, James and Wu, Jianhong},
  journal={Mathematical biosciences and engineering},
  volume={4},
  number={2},
  pages={159},
  year={2007},
  publisher={American Institute of Mathematical Sciences}
}
\onecolumn
\appendix
\section{On the Existence of a Solution for $\gamma$ in Problem~\eqref{eq:gamma_compute}}
In the subsection \textit{LILAD Evaluation and Stability Enforcement}, we informally state that a solution of the attenuation factor $\gamma$ is guaranteed to exist in $[0,1]$. This is a founding result that enables the computation of a suitable $\gamma$ for problem \ref{eq:gamma_compute}. 

\begin{proposition}
   Let $G_{\theta^{*}}$ and $V_{\phi^{*}}$ be the optimal adaptive dynamics and Lyapunov models obtained from LILAD training, respectively. Let $\mathscr{C}_j^{M+1}$ be a test-time prompt-prefix containing $j$ state-next-state pairs corresponding to a new test-time system instance $x_{k+1} = f_{\tilde\vartheta_{M+1}}(x_k)$. Let the exponential Lyapunov constraint be violated at an arbitrary $x \in X$, i.e., $V_{\phi^{*}}\big(G_{\theta^*}(x~|~\mathscr{C}_j^{M+1}) \big|\mathscr{C}_j^{M+1}\big) - \beta V_{\phi^{*}}(x | \mathscr{C}_j^{M+1} ) > 0$.  
   Then, a solution of $\gamma(x)$ exists in $[0,1]$ for the following root-finding problem:
   \begin{equation}
   \label{eq:gamma_exist_original}
       V_\phi^{*}\big(\gamma(x)\cdot G_\theta^{*}(x|\mathscr{C}_j^{M+1}) \big|\mathscr{C}_j^{M+1}\big) - \beta V_\phi^{*}(x | \mathscr{C}_j^{M+1} ) = 0
   \end{equation}
\end{proposition}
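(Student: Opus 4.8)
The plan is to apply the Intermediate Value Theorem to the scalar function
\[
h(\gamma) := V_{\phi^{*}}\big(\gamma\cdot G_{\theta^{*}}(x~|~\mathscr{C}_j^{M+1}) ~\big|~\mathscr{C}_j^{M+1}\big) - \beta V_{\phi^{*}}(x ~|~ \mathscr{C}_j^{M+1})
\]
defined on $\gamma \in [0,1]$. First I would observe that $h$ is continuous in $\gamma$: the map $\gamma \mapsto \gamma\cdot G_{\theta^{*}}(x~|~\mathscr{C}_j^{M+1})$ is continuous (it is linear in $\gamma$ for a fixed query $x$ and fixed context), and by Lyapunov condition \eqref{cond:Lyap1} the warped output $V_{\phi^{*}}(\cdot~|~\mathscr{C}_j^{M+1})$ is continuous on $\mathbb{R}^d$; the composition of continuous maps is continuous, and subtracting the constant $\beta V_{\phi^{*}}(x~|~\mathscr{C}_j^{M+1})$ preserves continuity. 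This continuity is exactly the hypothesis needed to invoke IVT, and it is also precisely why the paper's remark about not needing input-output convexity of $V_{\phi^{*}}$ matters --- only continuity, which the transformer architecture guarantees, is used.

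Next I would evaluate $h$ at the two endpoints. At $\gamma = 1$, we have $h(1) = V_{\phi^{*}}\big(G_{\theta^{*}}(x~|~\mathscr{C}_j^{M+1})~\big|~\mathscr{C}_j^{M+1}\big) - \beta V_{\phi^{*}}(x~|~\mathscr{C}_j^{M+1}) > 0$ by the standing hypothesis that the exponential Lyapunov constraint is violated at $x$. At $\gamma = 0$, the first argument collapses to the origin, so $h(0) = V_{\phi^{*}}\big(0~\big|~\mathscr{C}_j^{M+1}\big) - \beta V_{\phi^{*}}(x~|~\mathscr{C}_j^{M+1})$. Here I would use the output-warping construction: the warping subtracts $c\cdot\tanh(V^{\text{raw}}_{\phi}(0~|~\mathscr{C}_j^{M+1}))$ inside $\sigma$ and adds $\epsilon\|\cdot\|^2$, so that $V_{\phi^{*}}(0~|~\mathscr{C}_j^{M+1}) = \sigma(0) + 0 = 0$. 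Hence $h(0) = -\beta V_{\phi^{*}}(x~|~\mathscr{C}_j^{M+1})$. By Lyapunov condition \eqref{cond:Lyap2}, $V_{\phi^{*}}(x~|~\mathscr{C}_j^{M+1}) \geq c_1\|x\|^2 \geq 0$, and since $0 < \beta < 1$ we get $h(0) = -\beta V_{\phi^{*}}(x~|~\mathscr{C}_j^{M+1}) \leq 0$. (If $x = 0$ the constraint would not be violated, contradicting the hypothesis, so in fact $h(0) < 0$ strictly, but $h(0) \leq 0$ already suffices.)

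With $h(0) \leq 0 < h(1)$ and $h$ continuous on $[0,1]$, the Intermediate Value Theorem yields a point $\gamma(x) \in [0,1]$ with $h(\gamma(x)) = 0$, which is exactly \eqref{eq:gamma_exist_original}. To match the ``largest such $\gamma$'' phrasing in \eqref{eq:gamma_compute}, I would additionally note that the zero set $\{\gamma \in [0,1] : h(\gamma) = 0\}$ is closed (preimage of a closed set under a continuous map) and nonempty, hence attains its supremum, so a largest root exists and lies in $[0,1)$ because $h(1) > 0$ rules out $\gamma = 1$; this also justifies bisection, which only requires a sign change and continuity. I do not expect a serious obstacle here --- the only subtlety is being careful that the warping genuinely forces $V_{\phi^{*}}(0~|~\mathscr{C}_j^{M+1}) = 0$ (so that $h(0) \leq 0$ rather than merely being of indeterminate sign), and that the continuity claim for the transformer-based $V_{\phi^{*}}$ is the one already asserted when the paper notes condition \eqref{cond:Lyap1} is inherently satisfied.
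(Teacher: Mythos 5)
Your proof is correct and takes essentially the same route as the paper: it constructs the same scalar function $H(\gamma)$, argues continuity from the transformer-based warped output, gets $H(1)>0$ from the violation hypothesis and $H(0)=-\beta V_{\phi^{*}}(x\,|\,\mathscr{C}_j^{M+1})\le 0$ from the output-warping identity $V_{\phi^{*}}(0\,|\,\mathscr{C}_j^{M+1})=0$, and invokes the intermediate value theorem, with your closedness remark about the largest root being a small bonus beyond the paper. One tiny quibble: your parenthetical that the constraint cannot be violated at $x=0$ is not quite right (it can be, if $G_{\theta^{*}}(0\,|\,\mathscr{C}_j^{M+1})\neq 0$), but since you correctly note that $h(0)\le 0$ already suffices for the IVT step, the argument is unaffected.
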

\begin{proof}
    We use an intermediate value theorem (IVT) argument to establish this result. First, at any state $x \in X$, we construct the following function
    \begin{equation*}
    H(\gamma_x) : =    V_\phi^{*}\big(\gamma_x \cdot G_{\theta^{*}}(x|\mathscr{C}_j^{M+1}) \big|\mathscr{C}_j^{M+1}\big) - \beta V_{\phi^{*}}(x | \mathscr{C}_j^{M+1} ),  
    \end{equation*}
    where $\gamma_x= \gamma(x)$, assumed for simplicity of exposition. As $V_{\phi^*}$ follows a standard transformer architecture, $H(\gamma_x)$ is continuous for $\gamma_x \in [0,1]$. Next, because of the violation of the Lyapunov constraint at $x$, the following is true:
    \begin{eqnarray}
    \label{eq:gamma_exit_upper}
        &V_{\phi^{*}}\big(G_{\theta^{*}}(x|\mathscr{C}_j^{M+1}) \big|\mathscr{C}_j^{M+1}\big) - \beta V_{\phi^{*}}(x | \mathscr{C}_j^{M+1}) > 0 \nonumber \\
        \equiv & H(1) > 0 ,
    \end{eqnarray}
\noindent where $H(1) = H(\gamma_x)\big|_{\gamma_x = 1}$. Next, because of the output warping technique-based architectural modification, we have the following for any parameter $\phi$:
\begin{equation*}
    V_\phi(x ~|~ \mathscr{C}_j^{M+1}) = \sigma\big(c\cdot\tanh(V^\text{raw}_\phi(x ~|~ \mathscr{C}_j^{M+1})) -~ c\cdot\tanh(V^\text{raw}_\phi(0 ~|~ \mathscr{C}_j^{M+1}))\big)  + \epsilon~\|x\|^2,
\end{equation*}
which directly leads to:
\begin{subequations}
\begin{align}
    & V_{\phi^{*}}(x ~|~ \mathscr{C}_j^{M+1})  >  0 \hspace{2em} \text{if }~ x \in X \setminus \{0\} \label{eq:semidef_1}\\
     & V_{\phi^{*}}(0 ~|~ \mathscr{C}_j^{M+1})  =  0 \label{eq:semidef_2}
\end{align}
\end{subequations}

\noindent We use these results to establish the following for $H(0) = H(\gamma_x)\big|_{\gamma_x = 0}$:
\begin{eqnarray}
 \label{eq:gamma_exit_lower}
     & \hspace{-20em}H(0) \nonumber \\
    =&  \hspace{-2.5em}V_{\phi^{*}}\big(0 \cdot G_{\theta^{*}}(x|\mathscr{C}_j^{M+1}) \big|\mathscr{C}_j^{M+1}\big) - \beta V_{\phi^{*}}(x | \mathscr{C}_j^{M+1} ) \nonumber \\
    =&\hspace{-9em}  V_{\phi^{*}}\big(0 |\mathscr{C}_j^{M+1}\big) - \beta V_{\phi^{*}}(x | \mathscr{C}_j^{M+1} ) \nonumber\\
   =& 0 - \beta V_{\phi^{*}}(x | \mathscr{C}_j^{M+1} ) \nonumber \hspace{8em}(\because \text{from} \eqref{eq:semidef_2}),\\
   = & - \beta V_{\phi^{*}}(x | \mathscr{C}_j^{M+1} ) < 0\hspace{8em}(\because \text{from} \eqref{eq:semidef_1}). 
\end{eqnarray}
Finally, using the continuity of $H(\gamma_x)$, and considering the results in \eqref{eq:gamma_exit_upper} and  \eqref{eq:gamma_exit_lower}, we invoke IVT and state that a solution of $H(\gamma_x)$ exists for $\gamma_x \in [0,1]$ because of the change of signs at two extreme points. Due to the construction of $H(\gamma_x)$, this directly leads to the existence of a solution for Problem \ref{eq:gamma_exist_original}. 
\end{proof}
\section{Description of Computational Resources}
All experiments have been conducted in a Dell XPS 8920 desktop with Core i7‑7700 quad‑core @3.6 GHz (boost up to 4.2 GHz), 8 MB cache, 65 W TDP. Total NVIDIA GPU memory and CPU memory used for training are 96 GB and 256 GB, respectively. We have utilized Python-based Numpy and Pytorch packages to train LILAD, ICL and all other neural network-based baseline methods. For stable-linear method, we have implemented Numpy and cvxpy python modules.
\section{Additional Experimental Details}

\subsection{Dynamical Systems and Multi-task Settings}
The dynamical systems considered in this work are inherently continuous in time. However, many real-world systems are implemented in discrete time using numerical integration techniques. Following this approach, we generate discrete-time trajectories using the fourth-order Runge-Kutta (RK4) integration method.
\subsubsection{Simple Pendulum}
The 1-link pendulum, also known as the simple pendulum, is a benchmark autonomous system in control theory. It consists of a point mass $m$ suspended from a massless rod of length $l$, attached to a hinge with coefficient of friction $b$. This system is stable around the vertically downward equilibrium position. The states are angular displacement from the equilibrium position $\theta$, and the angular velocity $ \dot \theta$. Given any initial angular displacement and angular velocity, the system eventually returns to its equilibrium position due to friction-based damping. The system dynamics are described as follows:
\begin{subequations}
    \begin{align}
        \dfrac{d x_1}{dt} & = x_2, \\
        \dfrac{d x_2}{dt} &= -\dfrac{g}{l}\sin{x_1}-\dfrac{b}{ml^2} x_2,
    \end{align}
\end{subequations}
where $x_1 = \theta$ and $x_2 = \dot{\theta}$ represent the angular position and angular velocity, respectively; and $g$ is the gravitational constant.

To create a diverse pool of multi-task trajectories, we model $(g, l, b)$ to be stochastic variables following a Gaussian distribution $\mathcal{N}(\mu_\text{sp}, \Sigma_\text{sp})$, where $\mu_\text{sp}$ and $\Sigma_\text{sp}$ denote the mean and covariance matrix, respectively. The shorthand subscript $\text{sp}$ represents the single-link pendulum system. Finally, we set $\mu_\text{sp} = [9.8, 3, 0.5]$ and $\Sigma_\text{sp} = \text{diag}([1, 1, 0.01])$, and draw 60 random parameter samples where each sample realizes a distinct dynamical system.
\subsubsection{Double Pendulum}
The double (2-link) pendulum, also referred to as a chaotic pendulum, consists of two point masses, $m_1$ and $m_2$, connected by a massless rod of length $l_2$. Similar to a simple pendulum, mass $m_1$ is suspended from a fixed hinge via another massless rod of length $l_1$, with a friction coefficient $b$ at the hinge. This configuration forms a coupled nonlinear system that exhibits chaotic behavior. The dynamics of the double pendulum are described by the following set of ordinary differential equations:
\begin{subequations}
    \begin{align}
        \dfrac{d x_1}{dt} & = x_2, \\
        \dfrac{d x_2}{dt} & = 
        \frac{
            -g(2m_1 + m_2)\sin x_1 
            - m_2 g \sin(x_1 - 2x_3)
            - 2\sin(x_1 - x_3) m_2 \left(x_4^2 l_2 + x_2^2 l_1 \cos(x_1 - x_3)\right)
        }{
            l_1 \left(2m_1 + m_2 - m_2 \cos(2x_1 - 2x_3)\right)
        } -b_1x_2, \\
        \dfrac{d x_3}{dt} & = x_4, \\
        \dfrac{d x_4}{dt} & =
        \frac{
            2 \sin(x_1 - x_3) \left(x_2^2 l_1 (m_1 + m_2) + g (m_1 + m_2) \cos x_1 + x_4^2 l_2 m_2 \cos(x_1 - x_3) \right)
        }{
            l_2 \left(2m_1 + m_2 - m_2 \cos(2x_1 - 2x_3)\right)
        } -b_2 x_4,
    \end{align}
\end{subequations}
where $x_1 = \theta_1$,  $x_2 = \dot{\theta_1}$, $x_3 = \theta_2$ and  $x_4 = \dot{\theta_2}$ represent the angular positions and velocities of masses $m_1$ and $m_2$ respectively, and $g$ is the gravitational constant.

To construct a multi-task data pool, we assume $(g, l_1, l_2, m_1, m_2, b_1, b_2)$ to be stochastic parameters following a Gaussian distribution $\mathcal{N}(\mu_\text{dp}, \Sigma_\text{dp})$, where $\mu_\text{dp}$ and $\Sigma_\text{dp}$ denote the mean and covariance matrix, respectively. The subscript $\text{dp}$ signifies the double pendulum system. In the end, we set $\mu_\text{dp} = [9.8, 3, 3, 1, 1, 0.5, 0.5]$ and $\Sigma_\text{sp} = \text{diag}([0.5, 0.5, 0.5, 0.2, 0.2, 0.01, 0.01])$, and draw 120 random parameter samples where each sample defines a unique dynamical system instance.

\subsubsection{Power System Network Containing Five Microgrids}
Microgrids are localized electrical systems designed to supply power to critical infrastructure, such as hospitals. These systems can operate both in grid-connected mode by interfacing with the larger distribution grid and in islanded mode, functioning autonomously. We consider a network comprising five microgrid units and operating in grid-connected mode via points of common coupling (PCCs). In this setting, all microgrids are individually equipped with an angle-droop controller, and the network follows AC power flow dynamics. Assuming a time-scale separation, we treat voltage magnitudes as constant, which reduces the system to the following five-dimensional dynamical model:
\begin{subequations}
    \begin{align}
    J_{\delta_i} \Delta \dot{\delta}_i=&-D_{\delta_i} \Delta \delta_i-\Delta P_i \\
    P_i=&\sum_{k \neq i} E_i^* E_k^* Y_{i k} \cos \left(\delta_{i k}-\gamma_{i       k}\right)
    +E_i^{* 2} G_{i i}, \hspace{3em}i \in\{1, \ldots, 5\},
    \end{align}
\end{subequations}
where  $\Delta \delta_i =\delta_i-\delta_i^*$; $\Delta P_i =P_i-P_i^*$, $\delta_i^*$, $P_i^*$, $Q_i^*$ represent the nominal set point values of the voltage magnitude, phase angle, and active power at the $i$\text{th} microgrid, respectively; $J_{\delta_i}$ is the $i$-th tracking time constant; $D_{\delta_i}$ is $i$-th droop coefficient; $Y_{ik}$ and $G_{ii}$ are Admittance matrix parameters. while $E_i^* = 1.0~ \text{p.u.}$ is the nominal voltage magnitude.

For a multi-task data pool, we treat the droop coefficients $(D_{\delta_1}, D_{\delta_2}, D_{\delta_3}, D_{\delta_4}, D_{\delta_5})$ as stochastic parameters following a Gaussian distribution $\mathcal{N}(\mu_\text{mg}, \Sigma_\text{mg})$, where $\mu_\text{mg}$ and $\Sigma_\text{mg}$ denote the mean and covariance matrix, respectively. The subscript $\text{mg}$ signifies the microgrid network. We select $\mu_\text{mg} = [2.0, 2.0, 2.0, 2.0, 2.0]$ and $\Sigma_\text{mg} = \text{diag}([0.2, 0.2, 0.2, 0.2, 0.2])$, and generate 100 random parameter samples where each sample instantiates a unique system configuration. We note that, remaining system parameters are constant with values adopted from \cite{jena2022fulldistnlf}.
\subsubsection{SEIR Epidemic Model of Two Interacting Populations}
The SEIR model is a mathematical framework widely used in epidemiology to understand and predict the dynamics of an infectious disease outbreak \cite{arino2007seir, turkyilmazoglu2024SEIR}. It splits a general population into four groups: susceptible (S), exposed (E), infectious (I), and removed (R). Each group is described as follows:
\begin{itemize}
    \item \textbf{Susceptible}: Individuals who are at risk of contracting the disease.
    \item \textbf{Exposed}: Individuals who have already contracted the disease, but are not yet infectious themselves.
    \item \textbf{Infectious}: Individuals who are already infectious and can spread the disease to others.
    \item \textbf{Removed}: Individuals who can no longer contract the disease due to recovery, immunity, demise, or effective isolation.
\end{itemize}
We consider a setting where two interacting populations, such as residents in two neighboring cities, and formulate a coupled SEIR model to capture the cross-population dynamics. The following system of ordinary differential equations characterizes the model.
\begin{subequations}
\begin{align}
\frac{dS_1}{dt} &= -\beta_{11} S_1 I_1 - \beta_{12} S_1 I_2 -v_1S_1, \\
\frac{dE_1}{dt} &= \beta_{11} S_1 I_1 + \beta_{12} S_1 I_2 - \sigma E_1, \\
\frac{dI_1}{dt} &= \sigma E_1 - \gamma_1 I_1, \\
\frac{dR_1}{dt} &= \gamma_1 I_1 + v_1S_1, \\[1em]
\frac{dS_2}{dt} &= -\beta_{22} S_2 I_2 -  \beta_{21} S_2 I_1 v_2S_2, \\
\frac{dE_2}{dt} &= \beta_{22} S_2 I_2 + \beta_{21} S_2 I_1 - \sigma E_2, \\
\frac{dI_2}{dt} &= \sigma E_2 - \gamma_2 I_2, \\
\frac{dR_2}{dt} &= \gamma_2 I_2+ v_2S_2,
\end{align}
\end{subequations}

where, 
\begin{itemize}
    \item $S_i, E_i, I_i, R_i$ represent the number of susceptible, exposed, infectious, and removed individuals in population $i \in \{1, 2\}$. We assume that each population size $N_i$ is constant and very large, and that the compartment variables $S_i$, $E_i$, $I_i$, and $R_i$ are normalized by their respective population sizes (i.e., they represent fractions of the total population).
    \item $\beta_{ij}$ denotes the transmission rate from infectious individuals in population $j$ to susceptible individuals in population $i$,
    \item $\sigma$ is the rate at which exposed individuals become infectious (i.e., the inverse of the incubation period).
    \item $\gamma_1$ and $\gamma_2$ are the recovery or removal rates of population $i \in \{\ 1, 2\}$ (i.e., the inverse of the infectious period).
    \item $v_1$ and $v_2$ are vaccination rates that steer the susceptible individuals towards the removed class due to immunity to the disease. 
\end{itemize}
The above system is stable if $S_i = \dfrac{\beta_{ii}}{\gamma_i}$, where $i \in \{1, 2\}$. In the long run, the susceptible, exposed, and infectious compartments ($S_i$, $E_i$, $I_i$) tend to zero, while the removed compartment $R_i$ approaches the total population size $N_i$. Since we normalize all variables by $N_i$, the removed fraction $R_i$ tends to 1, and we effectively track deviations from this limit (e.g., $1 - R_i$). We make a change of variable by introducing  $\tilde R_i = R_i-1$ for $i\in \{1,2\}$ to make the equilibrium coincide with the origin. The shifted dynamics are 
\begin{align*}
\frac{dS_1}{dt} &= -\beta_{11} S_1 I_1 - \beta_{12} S_1 I_2 - v_1S_1 \\
\frac{dE_1}{dt} &= \beta_{11} S_1 I_1 + \beta_{12} S_1 I_2 - \sigma E_1 \\
\frac{dI_1}{dt} &= \sigma E_1 - \gamma_1 I_1 \\
\frac{d\tilde{R}_1}{dt} &= \gamma_1 I_1 + v_1S_1\\
\\
\frac{dS_2}{dt} &= -\beta_{22} S_2 I_2 - \beta_{21} S_2 I_1 - v_2S_2 \\
\frac{dE_2}{dt} &= \beta_{22} S_2 I_2 + \beta_{21} S_2 I_1 - \sigma E_2 \\
\frac{dI_2}{dt} &= \sigma E_2 - \gamma_2 I_2 \\
\frac{d\tilde{R}_2}{dt} &= \gamma_2 I_2 + v_2S_2
\end{align*}
The above dynamics have the exact same form as the original dynamics. However, we emphasize that instead of tracking $R_i$, we track $\tilde R_i$, which eventually leads to a stable equilibrium at the origin $(0, 0, 0, 0, 0, 0, 0, 0).$

We assume self-transmission and recovery rates $(\beta_{11}, \beta_{22}, \gamma_1, \gamma_2)$ as stochastic parameters jointly governed by $\mathcal{N}(\mu_{\text{SEIR}}, \Sigma_{\text{SEIR}})$.
We choose $\mu_\text{SEIR} = [0.2, 0.2, 0.75, 0.75]$ and $\Sigma_{\text{SEIR}} = \text{diag}([0.007, 0.007, 0.007, 0.007])$, and draw 120 samples for the multi-task pool. For all other parameters, we set $\beta_{1,2} = \beta{2,1} = 0.01$, $\sigma = 0.2$.
\subsection{Discretized Reaction-Diffusion PDE as a High-Dimensional Dynamical System}
Reaction-diffusion PDEs describe the spatiotemporal evolution of a variable $U$ across a spatial domain. Such PDEs are generally represented as follows:
\begin{equation}
    \label{eq:PDE}
    \dfrac{\partial U}{\partial t} = \alpha_\text{diff} \nabla_{x} U + f^{r}(U),
\end{equation}
  where $f^{r}(.)$ denotes the reaction term, which characterizes the underlying physical process, and $\alpha_\text{diff}$ is the diffusion coefficient regulating the rate and spatial spread of $U$. We choose the variable $U(x)$ to represent the temperature at a spatial point $x$. For the reaction term, we consider $-x(1+x^2)$, which makes the variable $U$ stabilize over time and asymptotically converge to the origin. Given a spatial domain $X$ of size $[0,10]^2$, using $\Delta x = 1 $ unit,  we divide $X$ into a $10\times10$ grid,  obtaining 100 nodal points as follows: 
  \begin{equation*}
      x^{\Delta} =[x_{1,1}, x_{1,2}, \hdots, x_{1, 10}, \hdots, x_{2,1}, x_{2,2}, \hdots, x_{2, 10}, \hdots, x_{10, 1}, x_{10, 2}, \hdots, x_{10, 10}],
  \end{equation*}
Using a finite-difference spatial discretization, we approximate the spatial Laplacian operator present in 
\eqref{eq:PDE}, and construct a 100-dimensional system of ODEs. The temperature at nodal points is decribed ny $U^\Delta = [u_{1,1}, u_{1,2}, \hdots, u_{1, 10}, \hdots, u_{2,1}, u_{2,2}, \hdots, u_{2, 10}, \hdots, u_{10, 1}, u_{10, 2}, \hdots, u_{10, 10}]$. Now, we formally state the dynamical system as follows: 
\begin{equation}
   U^{\Delta}_{k+1} = U^{\Delta}_k + \Delta t \left( \alpha_\text{diff} \mathbf{L} U^{\Delta}_k + f^r(U^{\Delta}_k) \right),
\end{equation}
where $\textbf{L}$ is $100\times100$ discrete Laplacian matrix, $f^r(U^{\Delta}_k)$ is element-wise computed reaction term and $\Delta t$ is the time-step size. We set $\Delta t = 0.001$. 

In this setting, the diffusion coefficient $\alpha_\text{diff}$ is assumed to be stochastic governed by $\mathcal{N}(\mu_\text{PDE-SM}, \Sigma_\text{PDE-SM})$. We set $\mu_\text{PDE-SM} = [1.2]$ and $\Sigma_\text{PDE-SM} = \text{diag}([0.3])$, and draw 100 random samples to create the multi-task pool.
\subsection{Baselines}
\begin{itemize}
    \item \textbf{In-context Learning (ICL)}: This is a standard ICL approach without any specific emphasis on stability. Using the notations introduced in \textit{LILAD framework Set-up}, the ICL training in our case aims to optimize the following loss function:
    \begin{equation}
        \dfrac{1}{M(n+1)} \sum_{i=1}^M \sum_{j=0}^n  \|G_\theta(x_{i, j+1}|\mathscr{C}_j^i)
         - f_{\tilde\vartheta_i}(x_{i, j+1})\|,
    \end{equation}
    where $G_\theta(.)$ is the ICL-based dynamics model, $\mathscr{C}_j^i$ is the $(i,j)$-th context prefix as defined in the main paper, and $M$ and $n$ are the number of tasks and number of samples per task, respectively. Given a new test-time context prefix $\mathscr{C}^{M+1}_j$, a fully-trained ICL-model $G_{\theta^{*}}$ makes a prediction as $ \hat{x}_{k+1} = G_{\theta^{*}}(x_k | \mathscr{C}_j^{M+1})$.
    \item \textbf{Model Agnostic Meta-learning (MAML)}:
    We use a one-step-based MAML in our case. In this setting, we divide the data pool $\mathcal{D}^\text{train} := \{\mathcal{D}^{i}\}_{i=1}^{M} = \{\{(x_{i,j}, f_{\tilde \vartheta_i}(x_{i,j}))\}_{j=1}^n\}_{i=1}^M$ into $\{\mathcal{D}_{i}^\text{MAML-tr}, \mathcal{D}_{i}^\text{MAML-te}\}_{i=1}^M$, where $\mathcal{D}_{i}^\text{MAML-tr}$ and $ \mathcal{D}_{i}^\text{MAML-te}$ denote the training and test batches containing $p$ and $n-p$ samples, respectively., employed for inner-loop adaptation and outer-loop evaluation. The MAML training objective minimizes the following meta-loss:
    \begin{equation}
        \dfrac{1}{M(n-p)}\sum_{i=1}^{M}\sum_{j=1}^{n-p} L\Big( M_{\theta - \tilde\alpha\nabla L(\mathcal{D}_{i}^\text{MAML-tr})}(x_{i,j}), f_{\tilde \vartheta_{i}}(x_{i,j})\Big),
    \end{equation}
    where $L(x, y) = \|x-y\|$, and $L(\mathcal{D}) = \dfrac{1}{|\mathcal{D}|}\sum_{(x,y) \in \mathcal{D}}\|x-y\|$, $M_\theta$ is the meta-model and $\tilde \alpha$ is the inner-loop learning rate. During evaluation, using a new test-time dataset $\mathcal{D}_{M+1}^\text{MAML-tr}$ (processed from the test-time prompt $\mathscr{C}_j^{M+1}$), the gradient-based adaptation is performed as $\theta_{M+1}^{*} = \theta^{*} - \tilde\alpha L(\mathcal{D}_{M+1}^\text{MAML-tr})$, where $\theta^{*}$ and $\theta^{*}_{M+1}$ are optimal meta-parameter (from MAML training), and optimal adapted task parameter for $(M+1)$-th task, respectively.
    \item \textbf{Conditional Value at Risk (CVaR)}: The CVaR is a risk-averse technique that computes the expected loss across the worst-case scenarios beyond a certain confidence level. Using $\mathcal{D}^\text{CVaR} := \{\{(x_{i,j}, f_{\tilde \vartheta_i}(x_{i,j}))\}_{j=1}^n\}_{i=1}^M$, CVaR optimizes the following nested CVaR objective:
    \begin{equation}
        \min_{\tau_1, \tau_2 \in \mathbb{R}} \left\{ \tau_2 + \frac{1}{\beta Mn} \sum_{i=1}^{M} \left[\tau_1+ \sum_{j=1}^{n}  \left[ \|C_\theta(x_{i,j}) - f_{\tilde \vartheta_i}(x_{i,j})\| - \tau_1 \right]_+ -\tau_2\right]_+ \right\},
    \end{equation}
    where \( [a]_+ = \max(a, 0) \), $\beta \in (0,1)$ is the confidence level, and $\tau_1, \tau_2$ are auxiliary variables that approximate sample-level and task-level  value-at-risk, respectively. 
    \item \textbf{Stable-linear}:
    This method is non-adaptive and computes a linear matrix $A^{M+1}$ given the test-time dataset  $\mathcal{D}^{M+1} = \{(x_{M+1,j}, f_{\tilde \vartheta_{M+1}}(x_{M+1,j}))\}_{j=1}^n$. The stable-linear utilizes a convex formulation-based technique to minimize the following objective function:
    \begin{equation}
        \min_{\theta} \sum_{j=1}^{n-1} \left\| A^{M+1}x_{M+1,j} - f_{\tilde \vartheta_{M+1}}(x_{M+1,j}) \right\|^2 \quad \text{subject to} \quad \|A^{M+1}\| < 1,
    \end{equation}
    where the constraint ensures that the eigenvalues of $A^\text{M+1}$ lie in the unit circle, establishing the stability of the linear discrete-time dynamical system.
\end{itemize}
\subsection{LILAD Hyperparameters}
For LILAD, we provide the necessary hyperparameters in the following table:
\begin{table*}[h]
\caption{LILAD Hyperparameters Across Environments}
\label{tab:hyperparam}
\begin{center}
\begin{small}
\begin{sc}
\begin{tabular}{lclccccr}
\hline
Env. & GPT-2 Size & Tr. Epochs & Lr$_1$ & Lr$_2$ & Tr. samples per task  & Tasks & Context Size\\
\hline
SP
&(128, 6, 4) & 300K  & $5*10^{-7}$  & $1*10^{-8}$ & 100K & 100  & 128  \\
\hline
DP
& (128, 6, 4) & 500K & $1*10^{-8}$  & $1*10^{-8}$ & 120K & 120  & 128  \\
\hline
MG
& (128, 6, 4)&  400K &  $1*10^{-8}$ &$1*10^{-8}$   & 100K & 100  & 128  \\
\hline
SEIR
& (128, 6, 4) &  800K& $1*10^{-8}$  & $1*10^{-8}$  & 150K & 120  & 128 \\
\hline
PDE-SM 
& (512, 12, 8)& 2M & $1*10^{-8}$   & $1*10^{-8}$   & 200K & 60  &256  \\
\hline
\end{tabular}
\end{sc}
\end{small}
\end{center}
\vspace{-1em}
\end{table*}

In this table GPT-2 Size indicates (embedding dimension, no. of transformer blocks, no. of attention heads in each self-attention layer); Tr. Epochs is the number of training epochs for both model; $Lr_1$ and $Lr_2$ are learning rates for the model and Lyapunov function, respectively; Tr. samples per task is the number of training samples per task; Tasks is total number training tasks considered; and context size is the size of the context (prompt excluding the query) during training and testing for both models. The exponential stability coefficient $\beta$ for Lyapunov function is set as $0.95$ for all cases.
\end{document}